\newcommand{\cut}[1]{}
\newtheorem{theorem}{Theorem}          	% Theorem environment.
\newaliascnt{lemma}{theorem}				% 1 alias counter
\newaliascnt{conjecture}{theorem}			% 1 alias counter
\newaliascnt{remark}{theorem}				% 1 alias counter
\newaliascnt{corollary}{theorem}			% 1 alias counter
\newaliascnt{definition}{theorem}			% 1 alias counter
\newtheorem{definition}[definition]{Definition}    % Definition environment.
\newaliascnt{proposition}{theorem}			% 1 alias counter
\newaliascnt{example}{theorem}			% 1 alias counter
\newtheorem{claim}{Claim}
\newenvironment{proof}{\par\noindent {\em Proof.}\ } {\hfill$\Box$\par\medskip}
\providecommand{\E}[0]{\mathsf{E}}
\newcommand{\set}[1]{\{#1\}}                    % Set (as in \set{1,2,3}).
\newcommand{\R}{\mathsf{R}}
\newcommand{\Cl}{\mathsf{Cl}}
\newcommand{\new}{\mathsf{new}}
\begin{document}

\title{A Note on the Hardness of \\ the Critical Tuple Problem}
\author{Egor V.~Kostylev \\ \small{University of Oxford,} \\ \small{Oxford, UK} \\ \small{egor.kostylev@cs.ox.ac.uk} \and Dan Suciu \\ \small{University of Washington,} \\ \small{Seattle, WA, USA} \\ \small{suciu@cs.washington.edu}}

\maketitle

\section{Introduction}

The notion of critical tuple was introduced by Miklau and
Suciu~\cite{DBLP:journals/jcss/MiklauS07}, who also claimed that the problem of checking
whether a tuple is non-critical is $\Pi^p_2$-complete.  Kostylev identified
an error in the 12-page-long hardness proof.  It turns out that the issue is rather fundamental:
the proof can be adapted to show hardness of a relative variant of
tuple-non-criticality, but we have neither been able to prove the original
claim nor found an algorithm for it of lower complexity.  In this note we state formally the relative variant and present an alternative, 
simplified proof of its $\Pi^p_2$-hardness; we also give an NP-hardness proof for the original problem, the best lower bound we have been able to show.  Hence, the precise complexity of the original critical tuple problem remains open.

\section{Problem definition}

Consider a first-order signature without functional symbols.
A \emph{tuple} (or \emph{fact}) is a ground atom.
A {\em database instance}, or \emph{instance}
for short, is a set of tuples.
A {\em Boolean conjunctive query}, or just \emph{query} for short, is an implicitly existentially quantified first-order
sentence of the form
\begin{align*}
  \R_1(\bar x_1) \wedge \dots \wedge \R_m(\bar x_m)
\end{align*}
over \emph{atoms} $\R_i(\bar x_i)$, for $i = 1, \ldots, m$, with predicates $\R_i$ and vectors of variables $\bar x_i$. 
A {\em homomorphism} from a query $Q$ to an instance $I$ is a mapping from the variables of $Q$ to
constants such that every atom of $Q$ is mapped to a tuple in $I$. It is well-known that if there exists such a homomorphism, then $Q$ holds on $I$.

The following is the definition of a critical tuple as it is given in~\cite{DBLP:journals/jcss/MiklauS07} (modulo straightforward equivalent modifications).

\begin{definition}
A tuple $\tau$ is \emph{critical} for a query $Q$ if and only if there exists an instance $I$ with a homomorphism from $Q$ to $I$ such that there is no homomorphism from $Q$ to $I \setminus \{\tau\}$.
\end{definition} 

Next we give the definition of a relative critical tuple, which is slightly different from the previous one. However, as we will see later, this difference may have important consequences for the computational complexity of the considered problems.

\begin{definition}
A tuple $\tau$ is \emph{critical} for a query $Q$ \emph{relative to} an atom $g$ in $Q$ if and only if there exists an instance $I$ with a homomorphism from $Q$ to $I$ sending $g$ to $\tau$ such that there is no homomorphism from $Q$ to $I \setminus \{\tau\}$.
\end{definition}

%\begin{definition}
%  A tuple $t$ is {\em non-critical for $Q$} if:
%  \begin{align}
%    \forall h:Q\rightarrow I,  \exists h_{\new} : Q \rightarrow I - \set{t}\label{eq:crit}
%  \end{align}
%  A tuple $t$ is {\em non-critical for $Q$ relative to an atom $g$},
%  or, shortly, {\em non-critical for $Q,g$}, if:
%  \begin{align}
%    \forall h:Q\rightarrow I, (h(g)=t \Rightarrow \exists h_{\new} : Q \rightarrow I - \set{t})\label{eq:crit:g}
%  \end{align}
%  Conversely, the tuple is called {\em critical} for $Q$, or for $Q,g$
%  respectively, if it does not satisfy the conditions above.
%\end{definition}

We note that if $\tau$ is critical for $Q$ relative to an atom $g$ then it is critical for $Q$.
However, the converse fails. Indeed, take
$Q = \R(x,y,z,z) \land \R(x,x,y,y)$ over variables $x$, $y$ and $z$, and $\tau=\R(a,a,b,b)$ over constants $a$ and $b$. Then $\tau$ is
non-critical relative to the first atom, $g_1 = \R(x,y,z,z)$, because if
a homomorphism $h$ from $Q$ to an instance $I$ maps $g_1$ to $\tau$ then it sets $h(x)=h(y)=a$, and
therefore the image of the second atom is the tuple $\R(a,a,a,a) \in I$
making $\tau$ redundant for witnessing $g$.  However, $\tau$ is critical for $Q$
because $Q$ holds on the instance $I=\set{\R(a,b,c,c),\R(a,a,b,b)}$,
but false on $I \setminus \set{\tau}= \set{\R(a,b,c,c)}$.

As mentioned above, Miklau and Suciu showed $\Pi_2^p$-completeness of the problem of checking whether a tuple is not critical for a query~\cite{DBLP:journals/jcss/MiklauS07}. Alas, the proof of hardness is incorrect. The problem is that the constructed mapping $h_{new}$ from the variables of the query $Q$ to the instance $I - \{t\}$ (i.e., $I \setminus \{\tau\}$ in our notation) may not be a homomorphism, contrary to what is claimed. In particular, in case $A=X_2$ in Section A.13 when $(I, h)$ is a bad instance and $h(x_2) = 0$ we have that $h_{new}(x_2) = h(x_2^b)$, while $h_{new}(e_i)
= h(e_i)$ for $i = 1, 2$, since $h_{new}(z) = h(z)$. Hence, $h_{new}$ is not
a homomorphism, because it does not send atom $g_{X_2 = false}$ anywhere---$closure(g_{X_2 = false})$ does not have $h_{new}(g_{X_2 = false}) = N(a, z^b,
x_2^b, z, e_1, e_2)$, and $h_{new}(g_{X_2 = false})$ is not $g_{X_2 = false}$ by
itself.

\section{\!\!\!Complexity of relative critical tuple problem}

In this section we prove the $\Pi_2^p$-completeness of the problem of checking whether a tuple is not critical for a query relative to an atom. Note, however, that this result immediately implies only $\Pi_2^p$-membership of the original critical tuple problem, but says nothing about its hardness.

\begin{theorem}
Given a tuple $\tau$, a query $Q$, and an atom $g$ in $Q$, it is $\Pi^p_2$-complete to decide whether $\tau$ is not critical for $Q$ relative to $g$.
\label{thm_rct}
\end{theorem}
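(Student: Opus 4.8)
The plan is to establish both directions separately. Membership in $\Pi_2^p$ is routine: a tuple $\tau$ \emph{is} critical for $Q$ relative to $g$ iff there exists an instance $I$ (which we may take to be of size polynomial in $Q$, since only the image of the variables of $Q$ under the witnessing homomorphism matters, so $|I| \le m$ where $m$ is the number of atoms of $Q$) together with a homomorphism $h$ from $Q$ to $I$ sending $g$ to $\tau$, such that no homomorphism from $Q$ to $I \setminus \{\tau\}$ exists. This is a $\Sigma_2^p$ predicate ($\exists$ polynomial-size $I$ and $h$, $\forall$ candidate $h'$ check it is not a homomorphism), so non-criticality relative to $g$ is in $\Pi_2^p$. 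The real work is the hardness direction.

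For $\Pi_2^p$-hardness I would reduce from the canonical $\Pi_2^p$-complete problem $\forall\exists$-SAT (or equivalently $\forall\exists$-3SAT): given a formula $\forall \bar u\, \exists \bar v\, \varphi(\bar u, \bar v)$ with $\varphi$ in 3-CNF, decide whether it is true. The target is to build, in polynomial time, a query $Q$, an atom $g$ in $Q$, and a tuple $\tau$ such that $\tau$ is \emph{not} critical for $Q$ relative to $g$ iff $\forall \bar u\, \exists \bar v\, \varphi$ holds. The intuition to exploit: "not critical relative to $g$" means that for \emph{every} instance $I$ and \emph{every} homomorphism $h$ with $h(g) = \tau$, there is \emph{some} homomorphism $h'$ from $Q$ into $I \setminus \{\tau\}$. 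The outer universal quantifier over $I$ and $h$ will encode the $\forall \bar u$, and the inner existential over $h'$ will encode the $\exists \bar v$ together with the requirement that $\varphi$ is satisfied. The atom $g$ and the fixed tuple $\tau$ are the device that lets us "read off" the universally chosen truth assignment to $\bar u$ from the way $h$ must map $g$: because $\tau$ has a rigid pattern of repeated and distinct constants, any homomorphism sending $g$ to $\tau$ is forced to assign the variables appearing in $g$ in a constrained way, and we arrange that the only remaining freedom is a genuine binary choice (true/false) for each $u_i$.

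Concretely, I would use a single relation symbol $\R$ of suitable arity, constants that include at least two distinct values playing the roles of $0$ and $1$ (plus perhaps a few auxiliary constants), and design $Q$ as a conjunction of: (i) the atom $g$, whose image must be $\tau$; (ii) gadget atoms that, given the forced image of $g$, leave exactly one free "assignment" variable per $u_i$ ranging over $\{0,1\}$-like constants in $I$; (iii) for each existential variable $v_j$, a variable with a similarly constrained two-valued range; and (iv) for each clause of $\varphi$, a "clause atom" that the homomorphism $h'$ (into $I \setminus \{\tau\}$, hence unable to use $\tau$) can satisfy only if the chosen assignment to $\bar u, \bar v$ makes that clause true. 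The key trick — and this is where the relative-vs-absolute distinction matters, and presumably why the original absolute proof failed — is to ensure that when we delete $\tau$, the homomorphism $h'$ is \emph{not} permitted to simply re-use $h$ on the variables of $g$; the pattern of $\tau$ must be such that once $\tau$ is gone, $h'$ is forced onto genuinely different tuples of $I$ that carry the satisfying assignment for $\bar v$. I would build $I$ out of a "main" block encoding the $\bar u$-assignment fed in by the adversary, together with a polynomial-size fixed "palette" of tuples over the auxiliary constants from which $h'$ can pick witnesses for the $v_j$'s and the clauses.

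The main obstacle I anticipate is exactly the gadget that forces the two-valued behavior and the re-routing after deletion: we must guarantee simultaneously that (a) \emph{every} choice of $I$ and $h$ with $h(g)=\tau$ corresponds to \emph{some} well-formed truth assignment to $\bar u$ (so no "cheating" instance makes non-criticality vacuously fail or hold), and (b) the absence of $\tau$ in $I \setminus \{\tau\}$ genuinely blocks the trivial re-use of $h$ unless $\varphi$ is satisfied, so that the existence of $h'$ is equivalent to $\exists \bar v\, \varphi$ on that assignment. Getting both of these right — in particular, ruling out degenerate instances $I$ where the adversary collapses constants or adds extra tuples to sabotage the reduction, which is the subtlety that derailed the original 12-page argument — is the delicate part; I would handle it by making $\tau$'s constant pattern maximally rigid and by including "consistency-checking" atoms in $Q$ whose only possible images under any homomorphism extending $h(g)=\tau$ pin down the rest of the structure, leaving precisely the intended binary degrees of freedom.
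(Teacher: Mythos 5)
Your membership argument is fine and matches the paper's: the witnessing instance can be taken to be the homomorphic image of $Q$, hence of size at most the number of atoms of $Q$, giving an $\exists\forall$ (i.e., $\Sigma_2^p$) characterization of relative criticality and thus $\Pi_2^p$ for its complement. Your hardness plan also starts from the same place as the paper (a reduction from $\forall\exists$3SAT, with the universally chosen instance/homomorphism encoding the $\forall\bar u$ assignment and the rerouted homomorphism into $I\setminus\{\tau\}$ encoding $\exists\bar v$ plus clause satisfaction). But at the point where the actual proof has to happen, your text stops: you state the two properties the gadgets must have --- (a) every adversarial $(I,h)$ with $h(g)=\tau$ induces a well-formed $\bar u$-assignment, and (b) after deleting $\tau$ the rerouting succeeds exactly when some $\bar v$-extension satisfies $\varphi$ --- and then say you ``would handle'' them by making $\tau$ rigid and adding consistency atoms. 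That is precisely the part that is hard, and it is where the original 12-page argument of Miklau and Suciu broke; naming the requirements is not a proof of them. Nothing in your proposal constructs the gadgets, and nothing verifies either direction of the reduction.

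Concretely, the paper's construction needs machinery you have not supplied: a per-universal-variable gadget ($\R(x_u,x_u,y_u,y'_u)$, $\R(x_{\neg u},x_{\neg u},y'_{\neg u},y_u)$, $\R(f,f,y_u,y_u)$, with $\tau=\R(0,0,1,2)$ chosen so that only certain atoms can ever map to $\tau$) that forces two-valued behavior; per-clause atoms in three layers --- one ``query'' atom \eqref{eq_clauses}, seven ``satisfying assignment'' atoms \eqref{eq_assignments}, and eight ``backup'' atoms \eqref{eq_backups} distinguished by the tails $(r,r,z)$, $(z,z',s)$, $(s,s',s)$ --- so that the rerouted homomorphism is forced onto the satisfying-assignment layer while the other layers still have images after deletion; and the normalization that every $u\in\bar u$ occurs both positively and negatively in $\psi$, which the forward direction uses to pin down $h_{\new}(x_u)$. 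Moreover, the forward direction requires repairing the rerouted homomorphism (the paper's $h'_{\new}$, which resolves the cases where $h_{\new}$ has two admissible images) before a truth assignment can be read off, and the backward direction must handle \emph{arbitrary} instances $I$ and homomorphisms $h$ with $h(g)=\tau$ (the adversary may collapse constants), which the paper does by defining $\sigma(u)$ according to whether $h$ sends $\R(x_u,x_u,y_u,y'_u)$ to $\tau$ and then checking atom by atom that the constructed $h_{\new}$ lands in $I\setminus\{\tau\}$. Until you exhibit such gadgets and carry out both verifications, the hardness half of the theorem is not established; as it stands your proposal is a plausible plan that reproduces the paper's intuition but leaves its entire technical content as a gap.
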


\begin{proof}
The upper complexity bound is straightforward, because we can restrict the search for an instance witnessing criticality to  the search only through the instances of the size at most the size of the query $Q$; therefore, a possible $\Pi^p_2$ algorithm first guesses such an instance $I$ with a homomorphism from $Q$ to $I$ sending $g$ to $\tau$, and then calls for an NP oracle to check that there is no homomorphism from $Q$ to $I \setminus \{\tau\}$.

Next we prove the lower complexity bound. The proof is by reduction of $\forall\exists$3SAT---that is, of a canonical $\Pi_2^p$-complete problem. Consider a quantified Boolean formula $\phi = \forall \bar u.\, \exists \bar v.\, \psi$, where $\bar u$ and $\bar v$ are vectors of distinct propositional variables, and $\psi$ is a conjunction of clauses of the form $\alpha_1 \lor \alpha_2 \lor \alpha_3$ with each $\alpha_i$ either a propositional variable in $\bar u \cup \bar v$ or the negation of such a variable. We assume that each variable $u \in \bar u$ appears in $\psi$ both positively and negatively (we can do it without loss of generality, because if it is not the case for some $u$, then we can add to $\psi$ a clause $u \lor u \lor \neg u$ without changing the validity of $\phi$).
We construct a tuple $\tau$, a query $Q$, and an atom $g$ in $Q$ such that $\tau$ is not critical for $Q$ relative to $g$ if and only if $\phi$ is valid.

\smallskip

In the reduction we will use a predicate $\R$ of arity 4 and, for each clause $\gamma$ in $\psi$, a predicate $\Cl_\gamma$ of arity 6.

First, let $\tau$ be the tuple $\R(0, 0, 1, 2)$, where $0$, $1$ and $2$ are constants. 

%\egor{The construction below is modified}
Then, let $Q$ consist of the following, where all arguments are variables:
\begin{itemize}
\item[--] the atoms $\R(z, z', y, y')$ and $\R(s, s', p, p)$;
\item[--] the atoms $\R(x_u, x_u, y_u, y'_u)$, $\R(x_{\neg u}, x_{\neg u}, y'_{\neg u}, y_u)$, and $\R(f, f, y_u, y_u)$ for each $u \in \bar u$;
\item[--] for every clause $\gamma = \alpha_1 \lor \alpha_2 \lor \alpha_3$ in $\psi$ with each literal $\alpha_i$ over a propositional variable $w_i$, 
\begin{itemize}

\item[-] the atom 
\begin{equation}
\Cl_\gamma(x_{1}, x_{2}, x_{3}, r, r, z), \label{eq_clauses}
\end{equation}
where, for $i = 1, 2, 3$,
$$ 
x_i = \left\{\begin{array}{ll} x_{\alpha_i}, & \text{ if } w_i \in \bar u, \\ x_{w_i}, & \text{ if } w_i \in \bar v; \end{array}\right.
$$

\item[-] the atoms
\begin{equation}
\Cl_\gamma(b^\pi_{\alpha_1}, b^\pi_{\alpha_2}, b^\pi_{\alpha_3}, z, z', s),
\label{eq_assignments}
\end{equation}
for each (of 7) assignment $\pi$ of $w_1$, $w_2$ and $w_3$ satisfying $\gamma$, where, for $i = 1, 2, 3$, 
$$
b^\pi_{\alpha_i} = \left\{\begin{array}{ll} x_{\alpha_i}, & \text{ if } \pi(\alpha_i) = \texttt{true} \text{ and } w_i \in \bar u, \\ 
t, & \text{ if } \pi(w_i) = \texttt{true} \text{ and } w_i \in \bar v, \\ f, & \text{ otherwise};
\end{array}\right.
$$
\item[-] and the atoms
\begin{equation}
\Cl_\gamma(b^\pi_{\alpha_1}, b^\pi_{\alpha_2}, b^\pi_{\alpha_3}, s, s', s),
\label{eq_backups}
\end{equation}
for each (of 8) assignment $\pi$ of $w_1$, $w_2$ and $w_3$, where 
the $b^\pi_{\alpha_i}$ are as before.
\end{itemize}
\end{itemize} 

For example, for a clause $\gamma = u_1 \lor \neg u_2 \lor \neg v$ with $u_1, u_2 \in \bar u$ and $v \in \bar v$ query $Q$ contains atoms
\begin{equation}
\begin{array}{lrrrrrrrllrrrrrrr}
\Cl_\gamma(& x_{u_1}, & x_{\neg u_2}, & x_{v}, & r, & r, & z & ), \\ \\
\Cl_\gamma(& f, & f, & f, & z, & z', & s & ),  & & \Cl_\gamma(& f, & f, & f, & s, & s', & s & ),\\
& & & & & & & & & \Cl_\gamma(& f, & f, & t, & s, & s', & s &), \\
\Cl_\gamma(& f, & x_{\neg u_2}, & f, & z, & z', & s & ), & & \Cl_\gamma(& f, & x_{\neg u_2}, & f, & s, & s', & s & ), \\
\Cl_\gamma(& f, & x_{\neg u_2}, & t, & z, & z', & s &), & & \Cl_\gamma(& f, & x_{\neg u_2}, & t, & s, & s', & s & ), \\
\Cl_\gamma(& x_{u_1}, & f, & f, & z, & z', & s & ), & & \Cl_\gamma(& x_{u_1}, & f, & f, & s, & s', & s & ), \\
\Cl_\gamma(& x_{u_1}, & f, & t, & z, & z', & s & ), & & \Cl_\gamma(& x_{u_1}, & f, & t, & s, & s', & s & ), \\
\Cl_\gamma(& x_{u_1}, & x_{\neg u_2}, & f, & z, & z', & s & ), & & \Cl_\gamma(& x_{u_1}, & x_{\neg u_2}, & f, & s, & s', & s & ), \\
\Cl_\gamma(& x_{u_1}, & x_{\neg u_2}, & t, & z, & z', & s & ), & & \Cl_\gamma(& x_{u_1}, & x_{\neg u_2}, & t, & s, & s', & s & ).
\end{array}
\label{eq_ex}
\end{equation}

Finally, let $g$ be the first atom above, $\R(z, z', y, y')$.

\smallskip

We prove the reduction correct by means of the following two claims, the first for the forward direction and the second for the backward direction.

\begin{claim}
If $\tau$ is not critical for $Q$ relative to $g$ then $\phi$ is valid.
\end{claim}

\begin{proof}  
Let $\tau$ be not critical for $Q$ relative to $g$. We need to prove that $\phi$ is valid. 
To this end, consider an arbitrary assignment $\sigma$ of $\bar u$. We show that there exists an extension of $\sigma$ to $\bar v$ such that $\psi$ evaluates to $\mathtt{true}$ under the extended $\sigma$.

Consider a mapping $h$ sending 
\begin{itemize}
\item[--] $z$, $z'$, $y$, and $y'$ to $0$, $0$, $1$, and $2$, respectively; 
\item[--] $x_u$, $y_u$, and $y'_u$ to $0$, $1$, and $2$, respectively, for each $u \in \bar u$ such that $\sigma(u) = \mathtt{false}$;
\item[--] $x_{\neg u}$, $y'_{\neg u}$, and $y_u$ to $0$, $1$, and $2$, respectively, for each $u \in \bar u$ such that $\sigma(u) = \mathtt{true}$; and
\item[--] all other variables to fresh constants. 
\end{itemize}
Let also $I = h(Q)$, so that $h$ is a homomorphism from $Q$ to $I$ sending $g$ to $\tau$. In particular, $I$ contains the following tuples over $\R$:
$$
\begin{array}{ll}
\R(0, 0, 1, 2), & \text{(i.e., } \tau \text{)}, \\
\R(h(s), h(s'), h(p), h(p)), & \\
\R(h(x_{\neg u}), h(x_{\neg u}), h(y'_{\neg u}), 1), & \text{for each }  u \in \bar u, \text{ such that } \sigma(u) = \mathtt{false},\\
\R(h(x_u), h(x_u), 2, h(y'_u)), & \text{for each }  u \in \bar u, \text{ such that } \sigma(u) = \mathtt{true},\\
\R(h(f), h(f), 1, 1), & \text{if there is }  u \in \bar u, \text{ such that } \sigma(u) = \mathtt{false}, \\ 
\R(h(f), h(f), 2, 2), & \text{if there is }  u \in \bar u, \text{ such that } \sigma(u) = \mathtt{true}.
\end{array}
$$

Since, by our assumption, $\tau$ is not critical for $Q$ relative to $g$, there exists a homomorphism $h_{\new}$ from $Q$ to $I \setminus \{\tau\}$. 

First, we claim that $h_{\new}(z) = h(s)$ (which is different from $h(z) = 0$ by the fact that $h(s)$ is a fresh constant). Indeed, on the one hand, $h_\new(\R(z, z', y, y'))$ should be one of the $\R$ tuples in $I$ except $\tau$, so $h_{\new}(z)$ is one of $h(s)$, $h(x_{\neg u})$, $h(x_u)$ (for $u \in \bar u$), and $h(f)$. On the other hand, $h_\new(\Cl_\gamma(\ldots, r, r, z))$, for any clause $\gamma$, is the image of one of the $\Cl_\gamma$ atoms under $h$, so $h_{\new}(z)$ is either $h(z) = 0$ or $h(s)$. Therefore, the only option satisfying both conditions is $h_{\new}(z) = h(s)$. This fact also implies that $h_{\new}(z') = h(s')$. Moreover, since $h(s) \neq h(s')$, $h_\new(r) = h(z) = h(z') = 0$.

We also claim that $h_{\new}(x_v)$ is either $h(f)$ or $h(t)$ for each $v \in \bar v$. Indeed, since $h_{\new}(z) = h(s) \neq h(z) = 0$ and $h_\new(r) = h(z) = 0 \neq h(s)$, $h_\new(\Cl_\gamma(\ldots, r, r, z))$ is the image of one of the 7 $\Cl_\gamma$ atoms under $h$ for any $\gamma$. So, considering any $\gamma$ that has a literal over any variable $v \in \bar v$, $h_{\new}(x_v)$ is either $h(f)$ or $h(t)$.

Therefore, we can consider the extension of assignment $\sigma$ to $\bar v$ such that, for any $v \in \bar v$, $\sigma(v) = \mathtt{false}$ if $h_{\new}(x_v) = h(f)$ and $\sigma(v) = \mathtt{true}$ otherwise. We next prove that $\psi$ evaluates to $\mathtt{true}$ under the extended $\sigma$.

To this end, we first show that $h_{\new}(x_u) = h(f)$ for each $u \in \bar u$ such that $\sigma(u) = \mathtt{false}$. Indeed, this can be shown similarly to the claim above about $h_{\new}(z) = h(s)$. On the one hand, $h_\new(\R(x_u, x_u, y_u, y'_u))$ should be one of the $\R$ tuples in $I$ with the first two arguments equal and except $\tau$, so $h_{\new}(x_u)$ is one of $h(x_{\neg u})$, $h(x_{u'})$, $h(x_{\neg u'})$ (for $u' \neq u$), and $h(f)$. On the other, by our assumption there exists a clause $\gamma$ in $\psi$ with a positive literal $u$, so, for this $\gamma$, $h_\new(\Cl_\gamma(\ldots, x_{u}, \ldots, r, r, z))$ is the image of one of the $\Cl_\gamma$ atoms under $h$, which implies that $h_{\new}(x_u)$ is either $h(f)$ or $h(x_u)$. Therefore, the only option satisfying both conditions is $h_{\new}(x_u) = h(f)$. Moreover, this only option for the $\R$ atom implies that $h_\new(y_u) = h_\new(y'_u) = h(y_u) = 1$.

In fact, $h_{\new}(x_{\neg u})$ is either $h(x_{\neg u})$ or $h(f)$ for each $u \in \bar u$ as above---that is, such that $\sigma(u) = \mathtt{false}$. This can be shown in exactly the same way as above, except that $h(x_{\neg u})$ is also an option in the first condition, because $h(\R(x_{\neg u}, x_{\neg u}, y'_{\neg u}, y_u))$ is not $\tau$ and hence belongs to $I \setminus \{\tau\}$. Moreover, this implies that if $h_{\new}(x_{\neg u}) = h(x_{\neg u})$ then $h_{\new}(y'_{\neg u}) = h(y'_{\neg u})$, and if $h_{\new}(x_{\neg u}) = h(f)$ then $h_{\new}(y'_{\neg u}) = h(y_{u}) = 1$.

In a symmetrical way, we can prove that, for each $u \in \bar u$ such that $\sigma(u) = \mathtt{true}$, $h_{\new}(x_{\neg u}, y'_{\neg u}, y_u)$ is $h(f), 2, 2$ and $h_{\new}(x_u, y'_u)$ is either $h(x_u, y'_u)$ or $h(f), 2$.

Consider now the mapping $h'_{\new}$ that is the same as $h_{\new}$ except that
\begin{itemize}
\item[--] $h'_{\new}(x_{\neg u}, y'_{\neg u}) = h(x_{\neg u}, y'_{\neg u})$ for each $u \in \bar u$ with $\sigma(u) = \mathtt{false}$, and
\item[--] $h'_{\new}(x_u, y'_u) = h(x_u, y'_u)$ for each $u \in \bar u$ with $\sigma(u) = \mathtt{true}$.
\end{itemize}
In other words, if $h_\new$ has two options as described above for some $x_{\neg u}$ or $x_u$, then $h'_{\new}$ takes the option that is different from $h(f)$. 

We now claim that $h'_{\new}$ is also a homomorphism from $Q$ to $I \setminus \{\tau\}$. To verify this, we need to check that the images of all the atoms with $x_{\neg u}$ and $y'_{\neg u}$ in $Q$ under $h'_{\new}$ are in $I \setminus \{\tau\}$ for every $u \in \bar u$ with $\sigma(u) = \mathtt{false}$, and the images of all the atoms with $x_u$ and $y'_y$ are in $I \setminus \{\tau\}$ for every $u$ with $\sigma(u) = \mathtt{true}$. We do it just for the first case, and the second is symmetric. To this end, consider any $u \in \bar u$ with $\sigma(u) = \mathtt{false}$.

First, $x_{\neg u}$ and $y'_{\neg u}$ appear in the atom $\R(x_{\neg u}, x_{\neg u}, y'_{\neg u}, y_u)$ in $Q$. As we discussed above, there are two options for $h_{\new}(\R(x_{\neg u}, x_{\neg u}, y'_{\neg u}, y_u))$: it is either $\R(h(x_{\neg u}), h(x_{\neg u}), h(y'_{\neg u}), 1)$ or $\R(h(f), h(f), 1, 1)$. By construction, $h'_{\new}$ just picks the first option for this atom.

Second, $x_{\neg u}$ appears in atom~\eqref{eq_clauses} over $\Cl_\gamma$ for each $\gamma$ with $\neg u$ (for our example clause $u_1 \lor \neg u_2 \lor \neg v$ and $u_2$ this atom is listed first in~\eqref{eq_ex}). Since $h_\new(z) = h(s)$ and $h(s) \neq h(s')$, the image of this $\Cl_\gamma$ atom under $h_\new$ is the image of one of the 7 atoms~\eqref{eq_assignments} for $\gamma$ under $h$, corresponding to the assignments satisfying $\gamma$ (in the example they are in the left column). Here, a change of the value of $x_{\neg u}$ from $h(f)$, as can be in $h_\new$, to $h(x_{\neg u})$, as in $h'_\new$, preserves a homomorphism, because a change of value of a literal in a clause from $\mathtt{false}$ to $\mathtt{true}$ cannot turn the assignment from satisfying to not satisfying (in the example, for every atom with $f$ as the second argument in the first column there is the same atom with $x_{\neg u_2}$ instead).

Third, $x_{\neg u}$ appears in several atoms~\eqref{eq_assignments} over $\Cl_\gamma$ for each $\gamma$ as in the previous case (e.g., the 4 atoms with $x_{\neg u_2}$ in the rest of the first column in~\eqref{eq_ex}). Since $h_\new(z) = h(s)$ and $h_\new(z') = h(s')$, the images of these $\Cl_\gamma$ atoms under $h_\new$ are among the images of the 8 atoms~\eqref{eq_backups} for $\gamma$ under $h$ (i.e., the right column in~\eqref{eq_ex} for the example). Possible change from $h(f)$ to $h(x_{\neg u})$ as the value of $x_{\neg u}$ preserves the homomorphism by the construction of atoms~\eqref{eq_backups}.

Finally, $x_{\neg u}$ appears in several atoms~\eqref{eq_backups} over $\Cl_\gamma$ for each $\gamma$ (e.g., the 4 atoms with $x_{\neg u_2}$ in the second column in~\eqref{eq_ex}). Since $h_\new(s) = h(s)$, the images of these $\Cl_\gamma$ atoms under $h_\new$ are also among~\eqref{eq_backups}, and changing the value of $x_{\neg u}$ from $h(f)$ to $h(x_{\neg u})$ preserves the homomorphism as in the previous case.

So, overall we have that $h'_{\new}$ is a homomorphism from $Q$ to $I \setminus \{\tau\}$ such that
\begin{itemize}
\item[--] $h'_{\new}(x_u) = h(f)$ and $h'_{\new}(x_{\neg u}) = h(x_{\neg u})$ for each $u \in \bar u$ with $\sigma(u) = \mathtt{false}$,
\item[--] $h'_{\new}(x_u) = h(x_u)$ and $h'_{\new}(x_{\neg u}) = h(f)$ for each $u \in \bar u$ with $\sigma(u) = \mathtt{true}$,
\item[--] $h'_{\new}(x_v) = h(f)$ for each $v \in \bar v$ with $\sigma(v) = \mathtt{false}$, and
\item[--] $h'_{\new}(x_v) = h(t)$ for each $v \in \bar v$ with $\sigma(v) = \mathtt{true}$.
\end{itemize}
By the construction of atoms~\eqref{eq_assignments} this means that $\sigma$ satisfies all clauses $\gamma$---that is, $\psi$ is $\texttt{true}$ under $\sigma$. Since our original choice of $\sigma$ as an assignment of $\bar u$ was arbitrary, $\phi$ is indeed valid, as required.
%\dan{it may be possible that $h(x_u) \neq h(f)$, but $h_{\new}(x_u) =
%  h(f)$.  In effect, this changes the value of $u$ from 1 to 0.  We
%  need to address this.} \egor{I agree, "indeed, by construction, atom (1) can be sent to the image of one of corresponding atoms (2) only if the corresponding assignment of the propositional variables of $\gamma$ satisfies this clause." is a big vague, I will expand this tomorrow (time to sleep now)}
%\dan{let me know if you need help.  We can address this by eliminating
%  the negative $u$-literals, i.e. replace $\neg u \vee v$ with
%  $u' \vee v$, see my original latex file.  This is guaranteed to
%  work.  But if you find a shorter way, we can do that.} \egor{As you see, I did something, using your idea. I do not like it much, but I do not see any simpler way. It is also very subtle, so, please, check carefully. (Thanks for the finding the bug in the previous version, it is not as easy as I thought.)}
\end{proof}

\begin{claim}
If $\phi$ is valid then $\tau$ is not critical for $Q$ relative to $g$.
\end{claim}

\begin{proof}
Let $\phi$ is valid---that is, for all assignments of $\bar u$ there exists its extension to $\bar v$ such that $\psi$ evaluates to $\mathtt{true}$ under the extention. 
We need to show that $\tau$ is not critical for $Q$ relative to $g$. To this end, consider an arbitrary instance $I$ with a homomorphism $h$ from $Q$ to $I$ sending $g$ to $\tau$.
We prove that there is a homomorphism from $Q$ to $I \setminus \{\tau\}$. 
Consider an assignment $\sigma$ of $\bar u \cup \bar v$ satisfying $\psi$ such that $\sigma(u) = \mathtt{false}$ for $u \in \bar u$ if and only if $h$ sends the atom $\R(x_u, x_u, y_u, y'_u)$ of $Q$ to $\tau$ (which exists since $\phi$ is valid). We construct a mapping $h_{\new}$ from the variables of $Q$ to constants so that it sends
\begin{itemize}
\item[--] $z$, $z'$, $y$, and $y'$ to $h(s)$, $h(s')$, $h(p)$, and $h(p)$, respectively; 
\item[--] $x_u$ and $y'_u$ to $h(f)$ and $h(y_u) = 1$, respectively, for every  $u \in \bar u$ with $\sigma(u) = \mathtt{false}$;
\item[--] $x_{\neg u}$ and $y'_{\neg u}$to $h(f)$ and $h(y_u) = 2$, respectively, for every  $u \in \bar u$ with $\sigma(u) = \mathtt{true}$; 
\item[--] $x_v$ to $h(f)$ for every $v \in \bar v$ with $\sigma(v) = \mathtt{false}$;
\item[--] $x_v$ to $h(t)$ for every $v \in \bar v$ with $\sigma(v) = \mathtt{true}$;
\item[--] $r$ to $h(z) = h(z') = 0$; and
\item[--] all other variables same as $h$.
\end{itemize}

%\dan{we need to assume that $I$ is minimal (i.e. $h(Q) = I$) and it is ``fine'', i.e. all variables are sent to fresh constants or to constants in $t$.}
%
%\egor{Why should we bother? Actually, we should *not* assume this in general, because otherwise our reasoning is incomplete. The difference with your proof is that you first say that we may consider only minimal and fine instances, and then consider these instances; I do both steps in one.}

We claim that $h_{\new}$ is a homomorphism from $Q$ to $I \setminus \{\tau\}$. Indeed, $h_{\new}$ sends 
\begin{itemize}
\item[--] atoms $\R(z, z', y, y')$ and $\R(s, s', p, p)$ to $h(\R(s, s', p, p))$; 
\item[--] for each $u \in \bar u$, atom $\R(x_u, x_u, y_u, y'_u)$ either to $h(\R(f, f, y_u, y_u))$, if $\sigma(u) = \texttt{false}$, or to $h(\R(x_u, x_u, y_u, y'_u))$ otherwise; 
\item[--] for each $u \in \bar u$, atom $\R(x_{\neg u}, x_{\neg u}, y'_u, y_u)$ either to $h(\R(f, f, y_u, y_u))$, if $\sigma(u) = \texttt{true}$, or to $h(\R(x_{\neg u}, x_{\neg u}, y'_u, y_u))$ otherwise;
\item[--] for each $u \in \bar u$, atom $\R(f, f, y_u, y_u)$ to $h(\R(f, f, y_u, y_u))$; 
\item[--] each atom~\eqref{eq_clauses} to the $h$-image of the atom in~\eqref{eq_assignments} corresponding to the projection of $\sigma$ to the variables of the clause;
\item[--] each atom in~\eqref{eq_assignments} to an atom in~\eqref{eq_backups}; and
\item[--] each atom in~\eqref{eq_backups} to a (possibly different) atom in~\eqref{eq_backups} as well.
\end{itemize}

Since the choice of $I$ was arbitrary, we conclude that $\tau$ is indeed not critical for $Q$ relative to $g$.
\end{proof}

Therefore, the reduction is indeed correct and the theorem is proven.
\end{proof}

\section{Complexity of critical tuple problem}
In this section we show NP-hardness of the original critical tuple problem. Since the best known upper bound for this problem is $\Pi^p_2$ (see \cite{DBLP:journals/jcss/MiklauS07}, the proof is very similar to the membership proof in Theorem~\ref{thm_rct}), we leave the precise complexity open.

\begin{theorem}
Given a tuple $\tau$ and a query $Q$, it is \textnormal{NP}-hard to decide whether $\tau$ is not critical for $Q$.
\label{thm_ct}
\end{theorem}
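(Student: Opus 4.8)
The plan is to reuse the reduction from the proof of Theorem~\ref{thm_rct}, specialised to the degenerate case in which the universal block $\bar u$ is empty; this turns $\forall\exists$3SAT into plain $3$SAT, which is still \textnormal{NP}-complete. Concretely, given a $3$CNF formula $\psi$ over variables $\bar v$ (with three distinct variables per clause), I would take $\tau = \R(0,0,1,2)$ and let $Q$ consist of the atoms $\R(z,z',y,y')$ and $\R(s,s',p,p)$ together with, for every clause $\gamma$ of $\psi$ over $w_1,w_2,w_3$, the atom $\Cl_\gamma(x_{w_1},x_{w_2},x_{w_3},r,r,z)$, the seven atoms $\Cl_\gamma(b^\pi_{\alpha_1},b^\pi_{\alpha_2},b^\pi_{\alpha_3},z,z',s)$ over the satisfying assignments $\pi$ of $\gamma$, and the eight atoms $\Cl_\gamma(b^\pi_{\alpha_1},b^\pi_{\alpha_2},b^\pi_{\alpha_3},s,s',s)$ over all assignments $\pi$, where $b^\pi_{\alpha_i}$ is $t$ if $\pi(w_i)=\mathtt{true}$ and $f$ otherwise. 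This is exactly the query built in the proof of Theorem~\ref{thm_rct} once the $\bar u$-gadgets (and the corresponding argument substitutions) are deleted, and $g=\R(z,z',y,y')$ plays the same distinguished role. The reduction is plainly polynomial.

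The one new ingredient is the observation that, for this particular $Q$, criticality of $\tau$ coincides with criticality of $\tau$ relative to $g$. For the nontrivial direction, suppose some instance $I$ witnesses that $\tau$ is critical, with witnessing homomorphism $h\colon Q\to I$; then $\tau$ must lie in $h(Q)$, for otherwise $h$ is already a homomorphism into $I\setminus\{\tau\}$. Now the only atom of $Q$ that a homomorphism can map onto $\tau$ is $g$: the $\Cl_\gamma$-atoms use predicates different from $\R$, and the remaining $\R$-atom $\R(s,s',p,p)$ is always mapped to a tuple whose last two coordinates agree, whereas those of $\tau$ do not. Hence $h$ sends $g$ to $\tau$, so $I$ also witnesses criticality relative to $g$; the converse is the general implication recorded before Theorem~\ref{thm_rct}.

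With this in hand I would simply quote the two claims inside the proof of Theorem~\ref{thm_rct}: when $\bar u=\emptyset$ the formula $\phi$ degenerates to $\exists\bar v.\,\psi$, every mention of a $\bar u$-variable and of the auxiliary map $h'_{\new}$ becomes vacuous, and what survives of those proofs shows directly that $\tau$ is not critical for $Q$ relative to $g$ if and only if $\psi$ is satisfiable. Chaining this with the coincidence above yields: $\tau$ is not critical for $Q$ if and only if $\psi$ is satisfiable, the sought reduction from $3$SAT. The point that genuinely needs care — the main, though mild, obstacle — is precisely that coincidence of the two notions of criticality: it is what dictates the shape of $\tau$ (equal first two coordinates, distinct last two, so that $\R(s,s',p,p)$ is ruled out) and the absence of any leftover $\R$-atom of the form $\R(x_u,x_u,\dots)$. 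One should also re-read the proofs of the two claims to confirm that the steps actually used there — establishing $h_{\new}(z)=h(s)$, then $h_{\new}(r)=0$, then reading off a satisfying assignment of $\psi$ from $h_{\new}$ — never invoke the $\bar u$-part and so go through unchanged.
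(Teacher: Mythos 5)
Your proposal is correct, but it takes a genuinely different route from the paper. The paper proves Theorem~\ref{thm_ct} by a direct reduction from the directed graph homomorphism problem: given graphs $G_1$ (assumed weakly connected) and $G_2$, it sets $\tau=\R(0,1)$ and builds a small query from the edge atoms of both graphs plus $\R(x_{v^*},x)$ for one node $v^*$ of $G_1$ and $\R(x_v,x_v)$ for every node of $G_2$; non-criticality of $\tau$ then holds iff $G_1$ maps homomorphically into $G_2$. Your argument instead specialises the $\Pi^p_2$ reduction of Theorem~\ref{thm_rct} to an empty universal block, reducing from $3$SAT, and supplies the one missing ingredient: for the specialised query the only atom any homomorphism can send to $\tau=\R(0,0,1,2)$ is $g=\R(z,z',y,y')$ (the $\Cl_\gamma$-atoms have other predicates and $\R(s,s',p,p)$ always lands on a tuple with equal last coordinates), so plain and relative criticality coincide, and the two claims inside Theorem~\ref{thm_rct} indeed survive the specialisation unchanged (the $h'_{\new}$ normalisation becomes vacuous). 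I checked the details and they go through; note that this coincidence is exactly what breaks for the full query of Theorem~\ref{thm_rct}, where the atoms $\R(x_u,x_u,y_u,y'_u)$ can also hit $\tau$ --- which is why only the degenerate, existential fragment can be recycled this way. What each approach buys: yours reuses existing machinery and isolates a reusable principle (criticality equals relative criticality whenever a single atom of $Q$ can be mapped onto $\tau$), while the paper's reduction is self-contained and considerably lighter --- binary predicates, no clause gadgets --- and is arguably the more transparent witness that the obstacle to $\Pi^p_2$-hardness lies beyond NP-hardness. Both arguments, of course, stop at NP-hardness and leave the exact complexity open.
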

\begin{proof}
The proof is by reduction of the graph homomorphism problem. Let $G_1$ and $G_2$ be two directed graphs. Without loss of generality, we assume that $G_1$ is \emph{weakly connected}---that is, there exists an undirected path from any node to any node. We construct a tuple $\tau$ and a query $Q$, and then prove that $\tau$ is not critical for $Q$ if and only if there is a homomorphism from $G_1$ to $G_2$.

Let $\tau$ be $\R(0, 1)$ with $0$ and $1$ constants, while $Q$ consist of the following atoms, where all arguments are variables:
\begin{itemize}
\item[--] $\E(x_{v_1}, x_{v_2})$ for each edge $(v_1, v_2)$ of $G_1$ and $G_2$;
\item[--] $\R(x_{v^*}, x)$ for an arbitrarily chosen node $v^*$ of $G_1$; and
\item[--] $\R(x_v, x_v)$ for each node $v$ of $G_2$. 
\end{itemize}
  
We next prove that $\tau$ is not critical for $Q$ if and only if there is a homomorphism from $G_1$ to $G_2$, and start with the forward direction. Since $\tau$ is not critical for $Q$, for every instance $I$ with a homomorphism from $Q$  there is a homomorphism from $Q$ to $I \setminus \{\tau\}$. Consider a mapping $h$ from the variables of $Q$ to constants that sends $x_{v^*}$ and $x$ to $0$ and $1$, respectively, and all other variables to fresh constants. Let $I = h(Q)$, so that $h$ is a homomorphism from $Q$ to $I$. Since $\tau$ is not critical, there is a homomorphism $h_{\new}$ from $Q$ to $I \setminus \{\tau\}$. Note that, by construction, there is no node $v$ of $G_1$ and constant $a$ such that $\R(h(x_v), a)$ is in $I \setminus \{\tau\}$. Therefore, $h_{\new}(x_{v^*})$ has to be $h(x_v)$ for a node $v$ of $G_2$. Since $G_1$ is weakly connected and graphs $G_1$ and $G_2$ do not have common nodes, this holds also for all other nodes of $G_1$. So, $h_{\new}$ defines a homomorphism from $G_1$ to $G_2$, as required.

We proceed to prove the backward direction of the claim. To this end, let there be a homomorphism $h'$ from $G_1$ to $G_2$. We need to prove that $\tau$ is not critical for $Q$---that is, for every instance $I$ with a homomorphism from $Q$  there is a homomorphism from $Q$ to $I \setminus \{\tau\}$. Consider such an instance $I$ and a homomorphism $h$ from $Q$ to $I$. Since for any node $v$ of $G_2$ the atom $\R(x_v, x_v)$ cannot be mapped to $\tau$ by $h$, the mapping $h_{\new}$ that sends
\begin{itemize}
\item[--] $x_v$ to $h(x_{h'(v)})$ for each node $v$ of $G_1$,
\item[--] $x$ to $h(x_{h'(v^*)})$, and 
\item[--] $x_v$ to $h(x_v)$ for each node $v$ of $G_2$
\end{itemize}
is a homomorphism from $Q$ to $I \setminus \{\tau\}$, as required.
\end{proof}

\section{Conclusion}
Besides its own importance in database theory, the $\Pi_2^p$-hardness of the critical tuple problem, claimed in~\cite{DBLP:journals/jcss/MiklauS07}, has been used to establish $\Pi_2^p$-hardness of several other problems throughout the literature (e.g., in \cite{DBLP:conf/aaai/GrauK16}). Therefore, until the $\Pi_2^p$-hardness of the critical tuple problem is proved, all these results should be revised. In particular, it should be identified which of these problems could be proven $\Pi_2^p$-hard by an independent reduction and which are for now open together with the critical tuple problem.

\bibliographystyle{plain}
\bibliography{bib}

\end{document}